\begin{document}

\title{An efficient hybrid hash based privacy amplification algorithm for quantum key distribution}

\titlerunning{An efficient hybrid hash based PA algorithm for QKD}        

\author{Yan Bingze \and Li Qiong \and Mao Haokun \and Chen Nan}


\institute{Yan Bingze \at
           Harbin Institute of Technology, Harbin 150000, China  \\
		   \and
		   Li Qiong \at
		   Harbin Institute of Technology, Harbin 150000, China \\
		   \email{qiongli@hit.edu.cn}		  \\
		   \and
		   Mao Haokun \at
		   Harbin Institute of Technology, Harbin 150000, China  \\ 
		   \and
		   Chen Nan \at
		   Harbin Institute of Technology, Harbin 150000, China  \\ 
}

\date{Received: date / Accepted: date}

\maketitle

\begin{abstract}

Privacy amplification (PA) is an essential part in a quantum key distribution (QKD) system, distilling a highly secure key from a partially secure string by public negotiation between two parties. The optimization objectives of privacy amplification for QKD are large block size, high throughput and low cost. For the global optimization of these objectives, a novel privacy amplification algorithm is proposed in this paper by combining multilinear-modular-hashing and modular arithmetic hashing. This paper proves the security of this hybrid hashing PA algorithm within the framework of both information theory and composition security theory. A scheme based on this algorithm is implemented and evaluated on a CPU platform. The results on a typical CV-QKD system indicate that the throughput of this scheme ($261Mbps@2.6\times10^8$ input block size) is twice higher than the best existing scheme ($140Mbps@1\times10^8$ input block size). Moreover, this scheme is implemented on a mobile CPU platform instead of a desktop CPU or a server CPU, which means that this algorithm has a better performance with a much lower cost and power consumption.

\keywords{Privacy Amplification \and Quantum Key Distribution \and Multilinear-Modular-Hashing}
\end{abstract}

\section{Introduction}
\label{intro}


PA is the art of distilling a highly secure key from a partially secure string through public discussion between two parties \cite{Bennett1988a}. The current main application domain of PA is quantum key distribution. 

Quantum key distribution (QKD) is a notable technique which exploits the principle of quantum mechanics to perform the information theoretical secure key distribution between two remote parties, named Alice and Bob \cite{BennettCharlesandBrassard1984}. The performance and practicability of a QKD system have improved rapidly in recent years, and the QKD system has been applied in engineering and commercialization~\cite{Yuan2018,Zhang2019b}. The current trends of a QKD system are improving the performance, such as transmission distance, key rate and the cost reduction~\cite{Liao2017,Xia2019}. A QKD system has two major parts: a quantum optical subsystem for the preparation, transmission and measurement of quantum states; a post-processing subsystem to guarantee the correctness and security of the final secure key \cite{Mao2019}. One of the essential part in the post-processing subsystem is privacy amplification which guarantees the security of the final secure key. 

The development demand of a QKD system leads to a trilemma in PA as indicated in Fig.~\ref{fig:1}. A PA scheme is supposed to have large block size, high throughput and low cost. However, the trilemma in PA means these objectives cannot be achieved perfectly at the same time. For example, a PA scheme with large block size and high throughput always costs high computing resources. The optimization methods for a PA scheme by changing computing platform and implementation method only improve certain indicator. A really effective optimization method solving the trilemma is to design a new PA algorithm.

The new PA algorithm designed in this paper is aimed to combine the advantages of two popular PA algorithms, Toeplitz hashing PA~\cite{Li2019,Liu2016a,Tang2019,XiangyuWangYichenZhangSongYu2016a,Yang2017,Zhang2012a} and modular arithmetic hashing PA \cite{Yan2020,Zhang2014}. The strength of Toeplitz hashing PA is that the input key can be split and handled separately. The strength of modular arithmetic hashing PA is that the input key can be compressed from a binary sequence to a $2^B-$nary sequence. More specifically, the optimal computation complexity of Toeplitz hashing PA is ${\mathop{\rm O}\nolimits} (n\log n)$. The calculation amount of Toeplitz hashing PA with splitting is approximately $C * \left\lceil {\frac{n}{r}} \right\rceil  * r\log r$, where $n$ is the input key length, $r$ is the output key length, and $C$ is a constant. The optimal computation complexity of modular arithmetic hashing PA is the same ${\mathop{\rm O}\nolimits} (n\log n)$.  The calculation amount of modular arithmetic hashing PA with compressing is approximately $C * (n/B)\log (n/B)$. Therefore, this paper aims to design a new algorithm to combine and utilize these two advantages. 

\begin{figure}
	\includegraphics{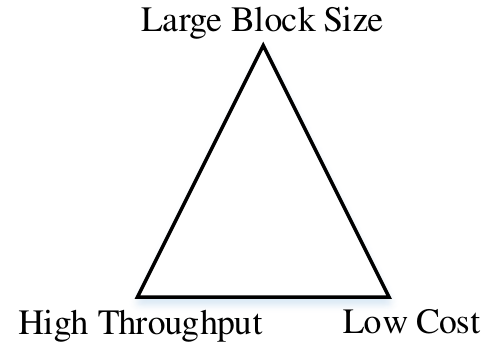}
	\caption{The trilemma in privacy amplification}
	\label{fig:1}       
\end{figure}

A new PA algorithm based on multilinear-modular-hashing (MMH) and modular arithmetic hashing (MH) is designed to implement the aforementioned optimization. Multilinear-modular-hashing is a well-known universal hashing family for fast message authentication. The most significant feature of this hashing family is that it can not only split and separately handle the input key but also compress the input key from a binary sequence to a $2^B-$ary sequence. The computation complexity of MMH is similarly ${\mathop{\rm O}\nolimits} (n\log n)$. However, the calculation amount of MMH is much lower, approximately $C*\left\lceil {\frac{n}{r}} \right\rceil * (r/B)\log (r/B)$. MMH cannot be used to design PA algorithms directly because the output set of MMH is a fixed finite field instead of a binary sequence of variable length. The output of a PA algorithm is required to be a variable length binary sequence because the secure degree of input partially secure string always changes real-time, especially in a QKD system. To address this issue, modular arithmetic hashing is combined with MMH to design the new PA algorithm, named MMH-MH PA algorithm. The security analysis of MMH-MH PA is accomplished with composition security, and we have confirmed that MMH-MH PA can achieve the same security with Toeplitz hashing PA and modular arithmetic hashing PA. A comparison of advantage and computation of different PA algorithms is indicated in Table~\ref{tab:1}. We assume the calculation amount of MMH-MH PA is $X$, and evaluate the calculation amount of the other PA algorithms to show the advantage of MMH-MH PA.

\begin{table}
\caption{The comparison of computation and advantage of different PA algorithms}
\label{tab:1}       
\begin{tabular}{cccc}
\hline\noalign{\smallskip}
PA Algorithm  & Advantage & Calculation Amount &  \makecell[c]{Calculation \\Comparison}  \\ 
\noalign{\smallskip}\hline\noalign{\smallskip}
\makecell[c]{Toeplitz \\Hashing PA}  			 & \makecell[c]{spliting the input key \\and handle it separately} 		& $C * \left\lceil {\frac{n}{r}} \right\rceil  * r\log r$ &	$B*X$ \\
\noalign{\smallskip}\hline\noalign{\smallskip}
\makecell[c]{Modular \\Arithmetic \\Hashing PA} 	 & \makecell[c]{transforming the input key \\from binary sequence \\to $2^B-$ary sequence} & $C * (n/B)\log  (n/B)$ 		& $\log  (n/B)/\log(r/B)*X$								 \\
\noalign{\smallskip}\hline\noalign{\smallskip}
MMH-MH PA 						 & combination of both 													& $C*\left\lceil {\frac{n}{r} } \right\rceil * (r/B)\log (r/B)$ &	$X$ \\
\noalign{\smallskip}\hline
\end{tabular}
\end{table}


We design an implementation scheme of MMH-MH PA on CPU platform to evaluate its performance. Because QKD protocols can be divided into discrete variable QKD (DV-QKD) and continuous variable QKD (CV-QKD) according to different working principles~\cite{Yuan2018,Zhang2012a,Zhang2019b}, the key parameters of a typical DV-QKD system~\cite{Yuan2018} and a typical CV-QKD system~\cite{Zhang2019b} are referred for the evaluation. Three conclusions are obtained from the results: 1. the throughput of MMH-MH PA is higher than other PA schemes on CPU. 2. the computing resource cost of MMH-MH PA is lower than existing schemes. 3. the final key rate of a QKD system with MMH-MH PA is better than the common case - $10^8$ input block size PA.

The rest of this paper is organized as follows. The MMH-MH PA algorithm is put forward in Section 2. We analyze the security of the MMH-MH PA algorithm in Section 3. We evaluate the performance of the MMH-MH PA algorithm and analyze the evaluation results in Section 4. Lastly, we draw conclusions about MMH-MH PA algorithm for future work in Section 5.

\section{Related Works}
\label{sec:2}
We introduce the definition and advantages of multi-linear modular hashing in this section. First, we discuss the reason why multi-linear modular hashing cannot be directly used for PA. Then we address the problem by introducing modular arithmetic hashing.  
\subsection{Multi-linear Modular Hashing}
The definition of multi-linear modular hashing is indicated as follows:
\paragraph{Definition of Multi-linear Modular Hashing}
Let $p$ be a primer and let $k$ be an integer $k > 0$. Define a family of multi-linear modular hashing functions from $Z^k_p$ to $Z_p$ as follows:
\begin{equation}{\rm{MMH}}: = \left\{ {{{\mathop{\rm g}\nolimits} _a}:Z_p^k \to {Z_p}\left| {a \in Z_p^k} \right.} \right\}\end{equation}
where the function ${{{\mathop{\rm g}\nolimits} _a}}$ is defined for any $a = \left\langle {{a_1}, \cdots ,{a_k}} \right\rangle $, $x = \left\langle {{x_1}, \cdots ,{x_k}} \right\rangle$, ${a_i},{x_i} \in {Z_p}$,
\begin{equation}{{\mathop{\rm g}\nolimits} _a}\left( x \right): = a \cdot x\bmod p = \sum\limits_{i = 1}^k {{a_i}{x_i}\bmod p} \end{equation}

The MMH family is a universal hashing family. Its collision probability $\delta$ is $1/|Z_p|$(See Appendix~\ref{sec:A}), and the proof can be found in \cite{Halevi1997}.

From the structure of the function ${{{\mathop{\rm g}\nolimits} _a}}$ in MMH, three advantages can be found: first, it can split and separately handle the input data like Toeplitz hashing; second, the input key can be compressed from a binary sequence to a $2^B$-nary sequence; third, its main part is large number multiplications, which means it can be accelerated by algorithms such as the Sch\"onhage and Strassen algorithm, GNU multiple precision (GMP) library. However, the main problem with the use of MMH directly for PA is the output set of MMH is a fixed finite field $Z_p$. The expected output form of a PA algorithm is a variable length bit sequence based on the secure degree of the input sequence. To solve this problem, modular arithmetic hashing is introduced to compensate for the drawback of MMH.

\subsection{Modular Arithmetic Hashing}
The definition of multi-linear modular hashing is indicated as follows:
\paragraph{Definition of Modular Arithmetic Hashing}
Let $\alpha$ and $\beta$ be two strictly positive integers, $\alpha > \beta$. Define a family modular arithmetic hashing of functions from $2^\alpha$ to $2^\beta$ as follows:
\begin{equation}{\rm{MH}}: = \left\{ {{h_{b,c}}:{Z_{{2^\alpha }}} \to {Z_{{2^\beta }}}\left| {b,c \in {Z_{{2^\alpha }}}} \right.,\gcd (b,2) = 1} \right\}\end{equation}
where the function $h_{b,c}$ is defined as follows:
\begin{equation}{h_{b,c}}(x): = {{\left( {b \cdot x + c\bmod {2^\alpha }} \right)} \mathord{\left/
 {\vphantom {{\left( {b \cdot x + c\bmod {2^\alpha }} \right)} {{2^{\alpha  - \beta }}}}} \right.
 \kern-\nulldelimiterspace} {{2^{\alpha  - \beta }}}}\end{equation}

Modular Arithmetic Hashing is also a universal hashing, and the output set of modular arithmetic hashing is a variable length bit sequence. Therefore, it can be combined with MMH to design a new PA algorithm.
 
\section{MMH-MH PA Algorithm Design and Security Analysis}
The multi-linear modular hashing-modular arithmetic hashing (MMH-MH) PA algorithm is proposed and discussed in this section. Since there is no similar previous PA algorithm, the security analysis of the MMH-MH PA algorithm is very important. The security analysis of the MMH-MH PA proves that it can produce secure key of the same length and provide the same  degree of secure within the framework of both information theory and composition security theory. 

\subsection{MMH-MH PA Algorithm Design}



\begin{figure}
\includegraphics[height=9cm,width=12cm]{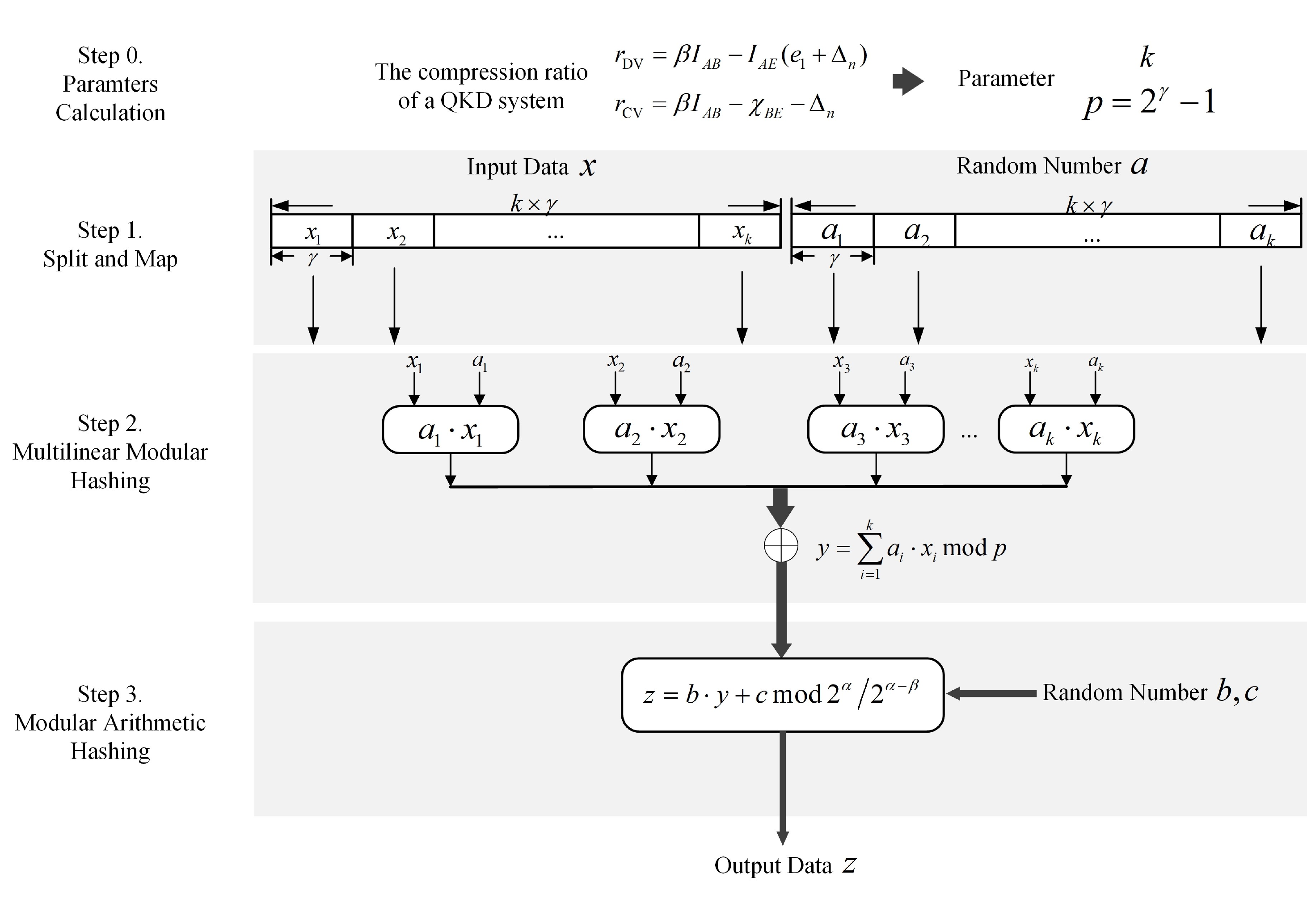}
\caption{The main steps of a MMH-MH PA algorithm}
\label{fig:2}       
\end{figure} 

The Multilinear Modular Hashing-Modular Arithmetic Hashing (MMH-MH) PA algorithm is a new combined PA algorithm, which splits the input sequence with MMH and outputs a variable length bit sequence with MH.The main steps of the MMH-MH PA algorithm include:
 
0. calculate the parameters of MMH-MH PA algorithm $p$ and $k$;

1. split the input bit sequence and map it to a prime field $Z_p$; 

2. randomly choose function of MMH with input random number $a$ and run the function on the output of step 1; 

3. randomly choose function of MH with input random numbers $b$ and $c$ and run the function on the output of MMH, then output a specified length bit sequence. 

It should be noted that the prime $p$ of the prime field $Z_p$ in step 1 is recommended to be a Mersenne prime. The form of a Mersenne prime is $M_\gamma = 2^\gamma - 1$. All known Mersenne prime is listed in \cite{Merssene}.
Mersenne prime provides two benefits: first, the process of converting input data $x$ from a binary field to a prime field is very simple. The special case is $x_i=2^\gamma - 1$, the data $x_i=2^\gamma - 1$ should be cast away and reloaded. Second, $x\bmod {{\mathop{ M}\nolimits} _\gamma } = \left\lfloor {{x \mathord{\left/
			{\vphantom {x {{2^\gamma }}}} \right.
			\kern-\nulldelimiterspace} {{2^\gamma }}}} \right\rfloor  + x\bmod {2^\gamma }$, so modulus operation can be simplified. 
The main step of the MMH-MH PA algorithm can be further understood in Fig. \ref{fig:2}.
Algorithm~\ref{alg1} described the complete procedure of the MMH-MH PA algorithm.


\begin{algorithm} 

\caption{MMH-MH PA algorithm} 
\label{alg1} 
\begin{algorithmic}[1] 
\REQUIRE Input Data: $x \in Z_{2^{k\times\gamma}}$. Random numbers:$a\in Z^k_p$, $b,c \in Z_{2^\gamma}$, $\gcd (b,2) = 1$.  \\\quad\quad//$p=M_\gamma=2^\gamma - 1$ 
\ENSURE $z \in Z_{2^\beta}$ //$\gamma > \beta$

\STATE $x = \left\langle {{x_1}, \cdots ,{x_k}} \right\rangle$ //split data $x$
\STATE $a = \left\langle {{a_1}, \cdots ,{a_k}} \right\rangle$ //split data $a$
\IF{$x_i = 2^\gamma - 1$($i=1,...,k$)} 
\STATE break;   //Reload data $x_i$
\ELSE
\FOR{$i=0$ to $k$}
\STATE $y_i=a_i\times x_i$  
\ENDFOR 
\STATE $y = \sum\limits_{i = 1}^k {{y_i}\bmod p} $  \quad\quad\quad\quad\quad/*MMH function: $y = {{\mathop{\rm g}\nolimits} _a}(x)$*/
\STATE $z = {{\left( {b \cdot y + c\bmod {2^\alpha }} \right)} \mathord{\left/
 {\vphantom {{\left( {b \cdot x + c\bmod {2^\alpha }} \right)} {{2^{\alpha  - \beta }}}}} \right.
 \kern-\nulldelimiterspace} {{2^{\alpha  - \beta }}}}$ /*MH function: $z = {{\mathop{\rm h}\nolimits} _{b,c}}(y)$*/
\ENDIF
\end{algorithmic} 

\end{algorithm}

A significant difference between the MMH-MH PA algorithm and existing PA algorithms is that the MMH-MH PA algorithm utilizes different universal hashing twice as opposed to the previous one-time method. Therefore, it is unclear if security risk exists in the MMH-MH PA algorithm. To verify its security, we analyze the security of the MMH-MH PA algorithm within the framework of information theory and composition security.

\subsection{Security Analysis of the MMH-MH PA algorithm}
To analyze the security of the new PA algorithm, eavesdropping behavior of the eavesdropper, named Eve, is assumed first under classical information. Then we prove that the MMH-MH PA algorithm can produce secure key of the same length and provide the same degree of security within the framework of information theory and composition security theory.

\subsubsection{Eavesdropping Assumption of the Eavesdropper}
Suppose that $x\in X$ is chosen uniformly at random by two parties, Alice and Bob. The eavesdropper, Eve, is given the value of $w=e(x)$, where $e:X\to W$ is an eavesdropping function. For each $w \in W$, define $c_w=|e^{-1}(w)|$. The probability distribution $p_w$ on $X$, given the value $w$, is the following:
\begin{equation}{p_w}(x) = \left\{ {\begin{array}{*{20}{c}}
{\frac{1}{{{c_w}}}}&{{\rm{if }}\ x \in {e^{ - 1}}(w)}\\
0&{{\rm{if }}\ x \notin {e^{ - 1}}(w)}
\end{array}} \right. \end{equation}
	
\subsubsection{Information Theory Security Analysis}
\label{sec:3.1}

The use of information theory is the most common method to analyze the security of PA.The secure evaluation standard based on information theory is the mutual information between output of PA $z$ and Eve's information $w$, i.e. ${\mathop{\rm I}\nolimits} (z:w)$. More precisely, it is ${\rm{I}}({q_z}:{e_w},{u_g},{u_h})$, where $q_z=q(z|g,h)$ means the probability distribution $q_z$ induced on $Z$, and the whole expression means the average mutual information over all possible values $w$, MMH functions $g$ and MH functions $h$. It holds that \[{\rm{I}}({q_z}:{e_w},{u_g},{u_h}) = H(q_z) - H(q_z|{e_w},{u_g},{u_h}) \le H(q_z) - {H_{{\rm{Ren}}}}(q_z|{e_w},{u_g},{u_h}),\] where $H$ means the Shannon entropy, and $H_{{\rm{Ren}}}$ means the Renyi entropy (See Appendix \ref{sec:B}).  $H(q_z)$ holds $H(q_z)\le \log_2|Z|$. ${H_{{\rm{Ren}}}}(q|{e_w},{u_g},{u_h})$ can be deduced by two useful lemmas, which have been proved in \cite{Stinson2002}:

\begin{lemma}
\label{lemma:1}
${H_{{\rm{Ren}}}}(p|{u_f}) \ge  - {\log _2}(\delta  + {\Delta _p})$, where $\delta$ means the collision probability of the hashing family (See Appendix \ref{sec:A}).
\end{lemma}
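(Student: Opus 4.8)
The plan is to prove this by a direct computation of the collision probability of the hashed output, using only the universality of the family $f$. Recall that the second-order (collision) Renyi entropy of a distribution $q$ on a set $Y$ is $H_{\rm Ren}(q) = -\log_2 P_c(q)$, where $P_c(q) := \sum_{y} q(y)^2$ is its collision probability; here $\delta$ is the collision probability of the family and $\Delta_p = \sum_{x} p(x)^2$ is the collision probability of the input distribution $p$. When $f$ is drawn uniformly (i.e.\ according to $u_f$), it induces the output distribution $q_f(y) = \sum_{x : f(x) = y} p(x)$. The quantity $H_{\rm Ren}(p|u_f)$ is controlled by the averaged output collision probability $\bar{P} := \mathrm{E}_{f}\!\left[ P_c(q_f) \right]$, and since $-\log_2$ is monotone decreasing it suffices to prove the single inequality $\bar{P} \le \delta + \Delta_p$.

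First I would expand the square in $P_c(q_f) = \sum_y q_f(y)^2$ and interchange the sum over $y$ with the average over $f$. Writing $q_f(y)^2 = \sum_{x, x' : f(x) = f(x') = y} p(x)\, p(x')$ and summing over all $y$ collapses the constraint to $f(x) = f(x')$, so that averaging over $f$ gives
\[
\bar{P} = \sum_{x, x'} p(x)\, p(x')\, \Pr_{f}\!\left[ f(x) = f(x') \right].
\]

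The key step is to split this sum along the diagonal $x = x'$. The diagonal terms contribute exactly $\sum_{x} p(x)^2 = \Delta_p$, since $\Pr_f[f(x) = f(x)] = 1$. For the off-diagonal terms with $x \neq x'$, universality gives $\Pr_f[f(x) = f(x')] \le \delta$, so together they contribute at most $\delta \sum_{x \neq x'} p(x)\, p(x') = \delta\,(1 - \Delta_p) \le \delta$. Adding the two pieces yields $\bar{P} \le \Delta_p + \delta$, and applying $-\log_2(\cdot)$ delivers the stated bound.

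I expect the only real care to lie in the bookkeeping of the diagonal split: one must recognize the $x = x'$ contribution as exactly $\Delta_p$ and bound the residual mass $\sum_{x \neq x'} p(x)\, p(x') = 1 - \Delta_p$ by $1$, while remembering that the universality estimate is an inequality (it becomes an equality only for an exactly universal family). Should $H_{\rm Ren}(p|u_f)$ instead be defined as the $f$-average of entropies $\mathrm{E}_f\!\left[ H_{\rm Ren}(q_f) \right]$ rather than as $-\log_2 \bar{P}$ directly, one extra application of Jensen's inequality for the convex map $-\log_2$ closes the same gap, since $\mathrm{E}_f[-\log_2 P_c(q_f)] \ge -\log_2 \mathrm{E}_f[P_c(q_f)] = -\log_2 \bar{P}$.
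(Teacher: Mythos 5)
Your proof is correct and is exactly the standard argument behind this lemma: the paper itself gives no proof (it defers to the cited reference \cite{Stinson2002}), and the argument there is precisely your computation of the averaged output collision probability, the diagonal/off-diagonal split giving $\Delta_p + \delta(1-\Delta_p) \le \Delta_p + \delta$, and a final application of $-\log_2$ (with Jensen's inequality if the conditional Renyi entropy is taken as an average of entropies rather than the entropy of the averaged collision probability). Nothing is missing.
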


\begin{lemma}
\label{lemma:2}
Suppose $F_1$ is a ${\delta _1} - {\mathop{\rm U}\nolimits} ({D_1};N,{M_1})$ hash family of functions from $X$ to $Y_1$, and $F_2$ is a ${\delta _2} - {\mathop{\rm U}\nolimits} ({D_2};{M_1},{M_2})$ hash family of functions from $Y_1$ and $Y_2$. For any $f_1\in F_1$,$f_2\in F_2$, define $f_1 \circ f2:X\to Y_2$ by the rule  $f_1 \circ f_2(x)=f_2(f_1(x))$. Then \[{f_1 \circ f_2:f_1\in F_1,f_2\in F_2)}\] is a $(\delta_1+\delta_2)-{\mathop{\rm U}\nolimits}(D_1,D_2;N,M_2)$  hash family.
\end{lemma}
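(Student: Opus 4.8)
The plan is to verify directly, from the definition of a universal hash family, the two things that the statement asserts about the composed family $\{f_1 \circ f_2\}$: a cardinality claim, that it is indexed by the $D_1 D_2$ pairs $(f_1,f_2) \in F_1 \times F_2$, and a collision claim, that for every pair of distinct inputs $x \neq x'$ in $X$ a uniformly chosen composed function sends them to the same output with probability at most $\delta_1 + \delta_2$. The cardinality claim is immediate from the pair indexing, so the real content lies in the collision bound, and the whole argument is a two-case analysis on the intermediate image in $Y_1$.

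First I would fix distinct $x, x' \in X$, draw $f_1$ uniformly from $F_1$ and, \emph{independently}, draw $f_2$ uniformly from $F_2$. The key observation is that the collision event $f_2(f_1(x)) = f_2(f_1(x'))$ decomposes according to the intermediate images $y = f_1(x)$ and $y' = f_1(x')$: either $y = y'$ already, a first-level collision, in which case the second hash trivially agrees, or $y \neq y'$, in which case a collision forces $f_2$ to merge two genuinely distinct points of $Y_1$. Conditioning on $f_1$ makes this split precise, since for a fixed $f_1$ the inner probability over $f_2$ is either $1$ (when $f_1(x) = f_1(x')$) or at most $\delta_2$ (when $f_1(x) \neq f_1(x')$, by the universality of $F_2$ applied to the distinct pair $y,y'$).

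The main computation I would then carry out is the averaging over $f_1$. Using the universality of $F_1$ to bound $\Pr_{f_1}[f_1(x) = f_1(x')] \le \delta_1$ and bounding the mass of the complementary event by $1$, I get
\[
\Pr_{f_1,f_2}\bigl[f_2(f_1(x)) = f_2(f_1(x'))\bigr] \le \Pr_{f_1}[f_1(x) = f_1(x')] + \delta_2 \le \delta_1 + \delta_2,
\]
where the first-level collisions contribute at most $\delta_1$ and the second-level collisions contribute at most $\delta_2$. Since this holds for every distinct pair $x, x'$, the collision claim follows and the composed family is $(\delta_1+\delta_2)$-universal as stated. I do not expect a serious obstacle; the one point needing care is the bookkeeping in the case split, namely recognizing that a first-level collision forces agreement at the second level with probability exactly $1$ rather than $\delta_2$, which is precisely why the two error terms add instead of multiply. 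I would also be explicit that $f_1$ and $f_2$ are chosen independently, since the conditioning-and-averaging step relies on that independence.
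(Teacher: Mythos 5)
Your argument is correct and is the standard one: the paper itself does not prove Lemma~\ref{lemma:2} but defers to the cited reference \cite{Stinson2002}, whose proof is exactly your two-case split on whether the first-level hash already collides, carried out by counting pairs $(f_1,f_2)$ rather than by your equivalent probabilistic phrasing with independent uniform draws. The only cosmetic difference is that the paper's Appendix~\ref{sec:A} states universality as a counting condition (at most $\delta D$ colliding functions), so a fully self-contained write-up would either note that your $\Pr_{f_1,f_2}$ bound over the $D_1D_2$ pairs is the same statement, or run the bookkeeping directly as $\delta_1 D_1\cdot D_2 + D_1\cdot\delta_2 D_2 \le (\delta_1+\delta_2)D_1D_2$.
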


\begin{theorem}
\label{theorem:1}
Let $x \in X$ is chosen randomly and $w \in W$ is the output of $w=e(x)$. $g$ is a function chosen by uniform distribution $u_g$ from MMH family $G$ and $h$ is a function chosen by uniform distribution $u_h$ from MH family $H$. The collision probability of $G$ and $H$ is $\delta_g$ and $\delta_h$. Perform $y = g(x)$ and $z = h(y)$, then the following relation holds \[{H_{{\rm{Ren}}}}({q_z}|{u_g},{u_h},{e_w}) \ge  - {\log _2}(\frac{|Z|}{|X|} + {\delta _g} + {\delta _h}),\] where $q_z$ is the probability distribution of $z \in Z$.
\end{theorem}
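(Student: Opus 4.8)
The plan is to collapse the two-stage map $x \mapsto y = g(x) \mapsto z = h(y)$ into a single composed hash function from $X$ to $Z$, use the composition lemma to control its collision probability, then invoke the Renyi-entropy lemma, and finally evaluate the residual collision probability of the input once Eve's side information $w$ is taken into account. In this way the novelty of applying two universal families in sequence is reduced to the well-understood behaviour of one universal family, which is exactly what is needed to show ``no security is lost.''

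First I would apply Lemma~\ref{lemma:2} with $F_1 = G$, the MMH family, which is $\delta_g$-universal from $X = Z_p^k$ to $Y_1 = Z_p$, and $F_2 = H$, the MH family, which is $\delta_h$-universal from $Y_1$ to $Z = Z_{2^\beta}$. Lemma~\ref{lemma:2} then guarantees that the composed family $\{g \circ h : g \in G,\, h \in H\}$, acting by $(g \circ h)(x) = h(g(x))$, is $(\delta_g + \delta_h)$-universal from $X$ to $Z$. This composed map is precisely what Algorithm~\ref{alg1} computes, so a single hash family with collision probability $\delta_g + \delta_h$ already captures the whole scheme and no extra loss is incurred by the two-step structure.

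Next I would bring in Lemma~\ref{lemma:1} for this composed family. Fixing a value $w$ of Eve's information, the eavesdropping assumption makes the input distribution $p_w$ uniform on $e^{-1}(w)$, a set of size $c_w$, so its collision probability is $\Delta_{p_w} = 1/c_w$. Lemma~\ref{lemma:1}, applied to the $(\delta_g+\delta_h)$-universal composed family under input distribution $p_w$ and averaging over the uniform choices $u_g, u_h$, then yields $H_{\mathrm{Ren}}(q_z \mid u_g, u_h, w) \ge -\log_2(\delta_g + \delta_h + 1/c_w)$; equivalently, the output collision probability is at most $\delta_g + \delta_h + 1/c_w$ for each fixed $w$.

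Finally I would average this per-$w$ bound to obtain the quantity $H_{\mathrm{Ren}}(q_z \mid u_g, u_h, e_w)$ in the statement. Reading the conditional Renyi entropy as $-\log_2$ of the expected collision probability and using $P(w) = c_w/|X|$ (since $x$ is uniform on $X$), the input term averages to $\sum_w (c_w/|X|)(1/c_w) = |W|/|X|$, giving $H_{\mathrm{Ren}}(q_z \mid u_g, u_h, e_w) \ge -\log_2(|W|/|X| + \delta_g + \delta_h)$. I expect the main obstacle to be exactly this last step. First, the averaging over $w$ interacts with the convex function $-\log_2$, so I would define the conditional Renyi entropy as $-\log_2 E_w[\,\cdot\,]$ (equivalently, work directly with expected collision probabilities) and note via Jensen's inequality that this is a valid lower bound for the per-$w$ entropies, keeping the inequality direction correct. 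Second, I would need to reconcile the naturally appearing $|W|/|X|$ with the stated $|Z|/|X|$, which is where the eavesdropping conventions of the model enter: either through the identification $|W| = |Z|$ built into the leakage model, or through the worst-case security assumption $c_w \ge |X|/|Z|$ that bounds $1/c_w \le |Z|/|X|$ uniformly in $w$. Once that identification is fixed, the stated inequality follows immediately.
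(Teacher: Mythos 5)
Your proposal follows essentially the same route as the paper: compose the two families via Lemma~\ref{lemma:2} into a single $(\delta_g+\delta_h)$-universal family and then apply Lemma~\ref{lemma:1}, with the input collision probability averaged over Eve's outcomes giving the $|W|/|X|$ term (your averaging step $\sum_w (c_w/|X|)(1/c_w)=|W|/|X|$ is in fact spelled out more carefully than in the paper, which simply asserts $\Delta_{p_{x|w}}=|W|/|X|$). Your worry about reconciling $|W|/|X|$ with the stated $|Z|/|X|$ is well founded but needs no bridging argument: the paper's own proof concludes with $-\log_2(|W|/|X|+\delta_g+\delta_h)$ and uses that form downstream, so the $|Z|$ in the theorem statement appears to be a typo rather than a gap you must close.
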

\begin{proof}
	For all $w \in W$, $p_{x|w}$ is an uniform distribution, and $\Delta_{p_{x|w}}=|W|/|X|$. When lemma \ref{lemma:1} is applied, it holds that \[{H_{{\rm{Ren}}}}({q_z}|{u_f},{e_w}) \ge  - {\log _2}(\frac{{|W|}}{{|X|}} + {\delta _f}).\] And when lemma \ref{lemma:2} is applied, the collision probability of the composition construction with MMH and MH is $\delta_g + \delta_h$. Then it is clear that ${H_{{\rm{Ren}}}}({q_z}|{u_g},{u_h},{e_w}) \ge  - {\log _2}(\frac{|W|}{|X|} + {\delta _g} + {\delta _h})$.
\end{proof}

According to theorem \ref{theorem:1}, the mutual information holds that \[{\rm{I}}({q_z}:{e_w},{u_g},{u_h}) \le \left| Z \right| + {\log _2}(\frac{{|W|}}{{|X|}} + {\delta _g} + {\delta _h}).\]

Suppose that $x$,$w$,$z$ is all binary form, i.e. $|X|=2^n$, $|Y|=2^m$, $|Z|=2^t$, And $\delta_g={1\mathord{\left/{\vphantom {1 {{2^\gamma }}}} \right. \kern-\nulldelimiterspace} {{2^\gamma }}}$, ${\delta _h} = {1 \mathord{\left/ {\vphantom {1 {{2^m}}}} \right.\kern-\nulldelimiterspace} {{2^m}}}$. Then the mutual information holds that
\[{\rm{I}}({q_z}:{e_w},{u_g},{u_h}) \le {{\left( {{2^{t - n + m}} + {2^{m - p}}} \right)} \mathord{\left/
 {\vphantom {{\left( {{2^{t - n + m}} + {2^{m - \gamma}}} \right)} {\ln 2}}} \right.
 \kern-\nulldelimiterspace} {\ln 2}}.\]
When $m=n-t-s$, $s$ is called security coefficient in information theory secure analysis in PA. And let $m << \gamma$ in MMH-MH PA design, the following relation holds,
\[{\rm{I}}({q_z}:{e_w},{u_g},{u_h}) \le {{{2^{ - s}}} \mathord{\left/
 {\vphantom {{{2^{ - s}}} {\ln 2}}} \right.
 \kern-\nulldelimiterspace} {\ln 2}}.\]
This is a well-established conclusion in the previous PA algorithm security analysis. It shows that MMH-MH PA has the equivalent security with the common PA algorithm (e.g. Toeplize-based PA and modular arithmetic hashing PA) under information theory.

\subsubsection{Composition Security Analysis}
Composition security is a popular security analysis tool in QKD, so we use it to analyze the security level of our PA algorithm. The secure evaluation standard based on security analysis is the Statistical distance (Definition \ref{definition:1}) between the conditional probability distribution of PA output given the eavesdropping information $w$ and the uniform distribution of perfect key, as defined in Definition \ref{definition:2}.
\begin{definition}
\label{definition:1}
Let $p$ and $q$ be two probability distributions on the set $X$. The statistical distance between $p$ and $q$, denoted $\mathop{\rm d}(p,q)$, is defined as follows:
\[{\rm{d}}(p,q){\rm{ = }}\frac{1}{2}\sum\limits_{x \in X} {\left| {p(x) - q(x)} \right|}. \]
\end{definition}

\begin{definition}
\label{definition:2}
Let $p_x$ be a probability distribution of random variable $x$ on the set $X$, $p_w$ be a probability distribution of random variable $w$ on the set $W$ and $u$ be uniform distribution, and let $\epsilon > 0$. $x$ is said to be $\epsilon$-secure with respect to $p_w$ if $d(p_{x|w},u)\le \epsilon$.
\end{definition}

With the proof in Section~\ref{sec:3.1}, the upper bound of collision probablity ${\Delta _{{q_z}|{u_g},{u_h},{e_w}}}$ can be obtained easily, \[{\Delta _{{q_z}|{u_g},{u_h},{e_w}}} \le {2^{ - m}} + {2^{t - n}}{\rm{ + }}{2^{{\rm{ - }}\gamma }}.\]

Then we give the relationship between the collision probability and the statistical distance with uniform distribution of a probability distribution with the following lemma, which is proved in \cite{Stinson2002}:

\begin{lemma}
\label{lemma:3}
Let $(Y,p)$ be a probability space. Then \[{\mathop{\rm d}\nolimits} (p,u) \le \sqrt {{\Delta _p}|Y| - 1} /2.\]
\end{lemma}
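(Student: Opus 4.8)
The plan is to recognize this as a standard inequality relating the $\ell_1$-type statistical distance to an $\ell_2$-type quantity, and to bridge the two norms with the Cauchy--Schwarz inequality. Writing $N = |Y|$ and recalling that the uniform distribution satisfies $u(y) = 1/N$ for every $y \in Y$, the statistical distance from Definition~\ref{definition:1} becomes
\[
{\mathop{\rm d}\nolimits}(p,u) = \frac{1}{2}\sum_{y \in Y}\left| p(y) - \frac{1}{N} \right|.
\]
First I would view the sum $\sum_{y}\left| p(y) - 1/N \right|$ as an inner product of the vector $\bigl( |p(y) - 1/N| \bigr)_{y\in Y}$ with the all-ones vector, so that Cauchy--Schwarz yields
\[
\sum_{y \in Y}\left| p(y) - \frac{1}{N} \right| \le \sqrt{N}\,\sqrt{\sum_{y \in Y}\left( p(y) - \frac{1}{N} \right)^{2}}.
\]
This is the key inequality: it converts the absolute values appearing in the statistical distance into squares, which can be matched against the collision probability $\Delta_p = \sum_{y\in Y} p(y)^2$ defined in Appendix~\ref{sec:A}.

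Next I would expand the sum of squared deviations and identify it with the collision probability. Using $\sum_y p(y) = 1$, a direct expansion gives
\[
\sum_{y \in Y}\left( p(y) - \frac{1}{N} \right)^{2} = \sum_{y \in Y} p(y)^2 - \frac{2}{N}\sum_{y\in Y} p(y) + N\cdot\frac{1}{N^2} = \Delta_p - \frac{1}{N}.
\]
Substituting this back and multiplying the Cauchy--Schwarz bound by the factor $\tfrac12$ from the definition of $\mathop{\rm d}$ produces
\[
{\mathop{\rm d}\nolimits}(p,u) \le \frac{1}{2}\sqrt{N\left(\Delta_p - \frac{1}{N}\right)} = \frac{1}{2}\sqrt{\Delta_p\,|Y| - 1},
\]
which is exactly the claimed bound.

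Honestly there is no deep obstacle here; the argument is entirely elementary once the Cauchy--Schwarz step is in place. The only point requiring mild care is the bookkeeping in the expansion of $\sum_y (p(y)-1/N)^2$, where the cross term must collapse via $\sum_y p(y)=1$ so that the $-1/N$ emerges cleanly and cancels against the factor $N$ inside the square root. I would also note in passing that equality in Cauchy--Schwarz holds precisely when the deviations $|p(y)-1/N|$ are constant across $y$, which is the natural tightness condition, but establishing this is not needed for the stated inequality.
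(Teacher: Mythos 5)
Your proof is correct: the Cauchy--Schwarz step followed by the expansion $\sum_{y}\bigl(p(y)-1/N\bigr)^{2}=\Delta_p-1/N$ is exactly the standard argument, and it is the proof given in \cite{Stinson2002}, which is where the paper delegates this lemma rather than proving it itself. The only slip is a citation of the wrong appendix: the collision probability $\Delta_p$ of a distribution is defined in Appendix~\ref{sec:B}, not Appendix~\ref{sec:A} (which defines the collision parameter $\delta$ of a hash family).
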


Then the following relation holds
\[{\mathop{\rm d}\nolimits} ({q_z}|{u_g},{u_h},{e_w},u) \le \sqrt {{2^{m + t - n}} + {2^{m - \gamma}}} /2.\]
Similarly, when $m=n-t-s$, and let $m \ll \gamma$ when MMH-MH PA is designed, the following relation holds
\[\varepsilon  \le {2^{\frac{{ - s}}{2} - 1}}.\]
This means the output key of MMH-MH PA can be ${2^{\frac{{ - s}}{2} - 1}}$-secure under composition security, when output lentgh is $m=n-t-s$.

\section{Implementation and Simulation}
\label{sec:4}

To evaluate the actual performance of the MMH-MH PA algorithm, we used the parameters of a typical DV-QKD system \cite{Yuan2018} and a typical CV-QKD system \cite{Zhang2019b} to design and evaluate the MMH-MH PA scheme, and the parameters are listed in Table~\ref{tab:2}.

\begin{table}
	\caption{Parameters of a typical DV-QKD system and a typical CV-QKD system}
	\label{tab:2}       
	\begin{threeparttable}          
	\begin{tabular}{c|c|c|c}
		\hline
		\hline
		\multicolumn{2}{c|}{A Typical DV-QKD System\cite{Yuan2018}}  & \multicolumn{2}{|c}{A Typical CV-QKD System\cite{Zhang2019b}} \\
		\hline
		\hline
		Parameters & Values & Parameters & Values \\
		\hline
		system clock frequency $f$ & $1\ GHz$ & transmission distance $l$ & $50\ km$ \\
		transmission distance $l$ & $10\ km$ & excess channel noise $V_{\epsilon}$ & $0.005$ \\
		 $u,v,w$\tnote{1} & $0.4,\ 0.1,\ 0.0007$ & electric noise $V_{el}$ & $0.041$ \\
		 $p_{u},p_{v},p_{w}$\tnote{2} & $96.9\%,\ 1.6\%,\ 1.4\%$ & $\eta$\tnote{5} & $0.606$ \\
		 $\eta_{\delta}$\tnote{3} & $22.5\%$  & reconciliation efficiency $\beta$ & $0.95$ \\
		SPD dark count rate $Y_{0}$ & $4.5^{-6}$ & security parameter of PA $\epsilon$ & $10^{-10}$ \\
		SPD system error $e_{\delta}$ & $3\%$ &  &    \\
		 $p_{Z}$, $p_{X}$\tnote{4} & $96.7\%$ $3.4\%$ &  &  \\
		security parameter of PA $\epsilon$ & $10^{-10}$ & & \\
		\hline
		\hline
	\end{tabular}
	\begin{tablenotes}    
		\footnotesize               
		\item[1] $u,v,w$: photon fluxes for signal, decoy and vacuum pulses         
		\item[2] $p_{u},p_{v},p_{w}$: respective probabilities for signal, decoy and vacuum pulses        
		\item[3] $\eta_{\delta}$: photon detector (SPD) detection efficiency 
		\item[4] $p_{Z}$, $p_{X}$: Z and X basis probabilities 
		\item[5] $\eta$: homodyne detector efficiency 
	\end{tablenotes}            
	\end{threeparttable}          
\end{table}

To design a MMH-MH PA scheme for a specific QKD system, two key parameters need to be confirmed. One is the unit block size $\gamma$, and the other is the block number $k$. The input block size of PA $n$ is equal to $\gamma \times k$ and expected to be as large as possible. Raising the unit block size $\gamma$ increases the computation cost and decreases the throughput.
The block number $k$ is expected to be as large as possible, but it is restricted by the compression ratio of the QKD system. So we calculated the compression ratio of QKD systems for evaluation to confirm the block number $k$. Then we designed the MMH-MH PA schemes with different unit block size $\gamma$ and evaluated their performance in three aspects: 1. throughput at different input block sizes; 2. the computational resource consumption of this scheme; 3. the final secure key rate of a system with this scheme.

As previously mentioned, the block number $k$ is restricted to the compression ratio of the QKD system. The compression ratio of a DV-QKD system can be calculated by $r = \beta {I_{AB}} - {I_{AE}(e_1+\Delta_n)} $ (See \cite{Yuan2018} for more details) and the compression ratio of a CV-QKD system can be calculated by $r = \beta {I_{AB}} - {\chi _{BE}} - \Delta_n$ (See \cite{Milicevic2017} for more details). Then the block number $k$ of the MMH-MH PA scheme can be confirmed according to the compression ratio, $k \le \frac{1}{r}$. The results are listed in Table~\ref{tab:3}.

\begin{table}
\caption{MMH-MH PA Parameter $k$ for typical QKD systems}
\label{tab:3}      
\begin{tabular}{ccc}
		\hline
		\hline
		System & Compression Ratio $r$ & Block Number of MMH-MH PA $k$ \\
		\hline
		Typical DV-QKD system \cite{Yuan2018} & 0.2957 & 3 \\
		Typical CV-QKD system \cite{Zhang2019b}& 0.0972 & 10 \\
		\hline
		\hline
	\end{tabular}
\end{table}

First, we evaluated the throughput of MMH-MH PA schemes with different block size $n=k \times \gamma$ on typical QKD systems. We compared these results with existing PA schemes on similar platforms in Fig.~\ref{fig:3}, and key parameters of these schemes are listed in Table~\ref{tab:6}. The experiment results indicate that the existing best PA scheme is implemented on a high performance CPU i9-9900k, and our scheme reaches nearly twice throughput on a common mobile CPU i5-7300HQ. So our scheme is able to reach higher throughput with lower cost. More specifically, the lower cost of our scheme can be reflected on: 1. the low cost CPU platform (fewer cores and lower basic frequency) used for the scheme; 2. the lower memory resource cost and the lower actual memory used. The higher performance and lower cost of our scheme verify the computing advantages of the MMH-MH PA algorithm.

\begin{figure}
\includegraphics[height=7.5cm,width=12.5cm]{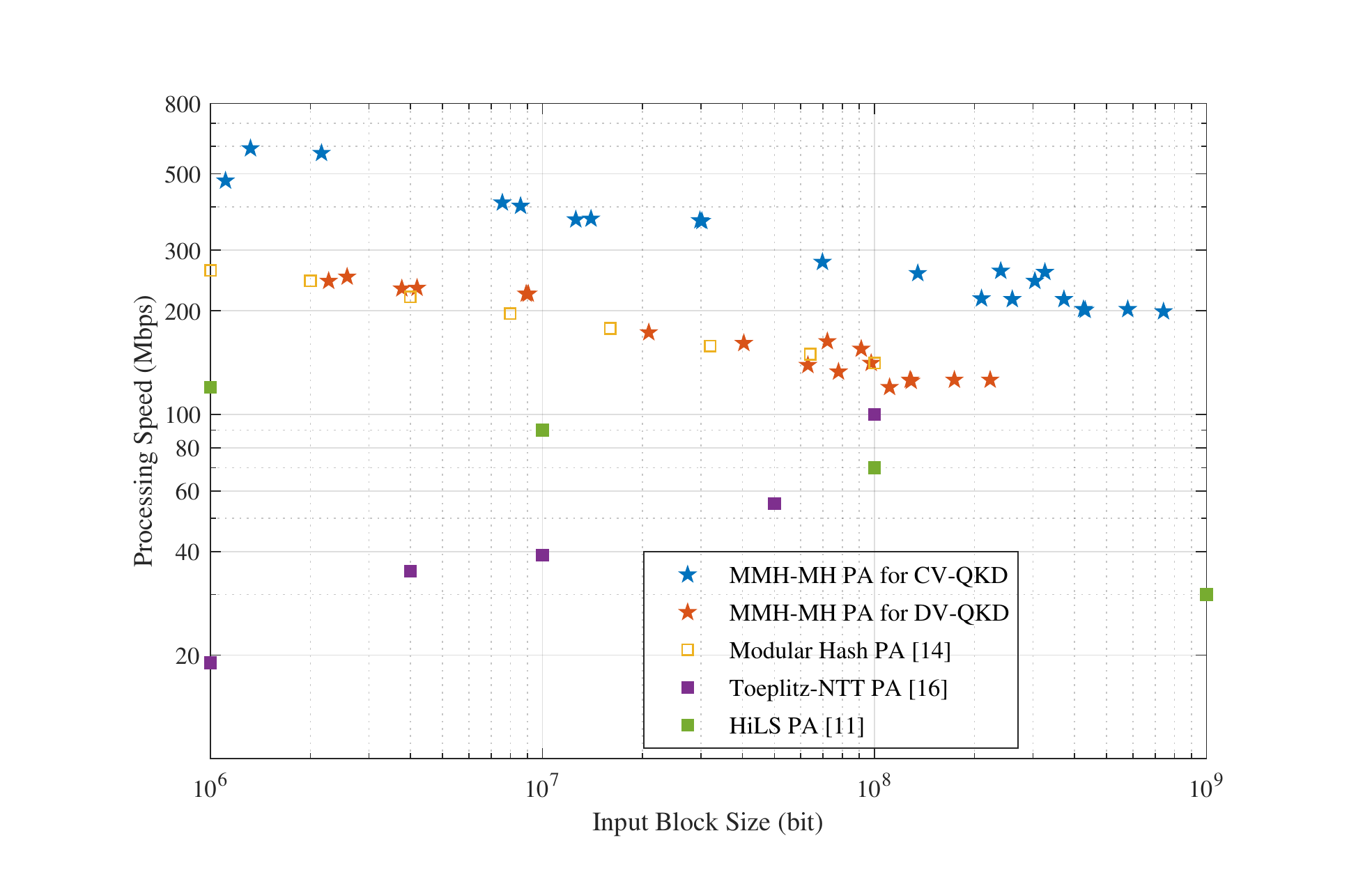}
\caption{The throughput comparison between MMH-MH PA scheme and exiting schemes}
\label{fig:3}       
\end{figure}

\begin{table}[htbp]
	\renewcommand{\arraystretch}{1.3}
	\newcommand{\tabincell}[2]{\begin{tabular}{@{}#1@{}}#2\end{tabular}}
	\caption{Key parameters of the compared schemes}
	\label{tab:6}
	\centering
	\resizebox{\textwidth}{15mm}{
	\begin{tabular}{cccccccc}
		\hline
		\hline
		PA scheme & Hash & Based Method & Platform & \tabincell{c}{Number \\ of cores} & \tabincell{c}{Basic \\ Frequency} & Memory & \tabincell{c}{Actual \\ Memory \\Used}  \\
		\hline
		MMH-MH PA & \tabincell{c}{MMH \\+MH} & GMP multiply & Intel i5-7300HQ & 4 & 2.20 GHz & 8GB & 120MB\\
		\tabincell{c}{Modular Hash PA}\cite{Yan2020} & \tabincell{c}{Modular \\Hash} & GMP multiply & Intel i9-9900k & 8 & 3.60GHz & 16GB & 220MB\\ 
		\tabincell{c}{Toeplitz NTT PA \cite{Yuan2018}}   & Toeplitz & NTT & \tabincell{c}{Intel Xeon E5-2620v2\\+ Intel Xeon Phi 7120A} & \tabincell{c}{8 \\ +61} &  \tabincell{c}{2.10GHz \\ +1.24GHz} & \tabincell{c}{16GB \\ +128GB} & -- \\
		HiLS PA\cite{Tang2019} & Toeplitz & FFT & Intel E5-2640 v3 $\times$ 2 & 8$\times$2 & 2.10GHz & 128GB& --\\
		\hline 
		\hline
	\end{tabular}}
\end{table}

The following simulation experiment shows the influence of our scheme on the final key rate of a QKD system. When the scheme throughput can satisfy the system demand, the main factor on the final key rate in PA is the input block size $n$. In our scheme, the input block size is $n = k \times \gamma$. It can reach $2 \times 10^8$ in the typical DV-QKD system and $8 \times 10^8$ in the typical CV-QKD system. In most QKD systems, $n = 10^8$ is common. We compare the final key rates of the QKD system with our scheme, the infinite size and the common input size in Fig. \ref{fig:4}. The final key rate of the QKD system with our scheme is better than the usual size. The improvement is more obvious in the CV-QKD system because the finite-size-effect is more serious in CV-QKD. This result can also be reached with the scheme in \cite{Tang2019}, while our scheme saves more computing resource and avoids the possible truncation error caused by floating point FFT of the scheme in \cite{Tang2019}.   

According to the above simulation experiment, the scheme with the MMH-MH PA algorithm shows better performance, lower cost and a positive effect on the final key rate of the entire system. Among these advantages, we regard low cost as the most competitive advantage for the near-term QKD system. It can improve the system performance and practicability in the resource consumption sensitivity scene, such as chip-based QKD systems.

\begin{figure}
\includegraphics[height=6cm,width=12.5cm]{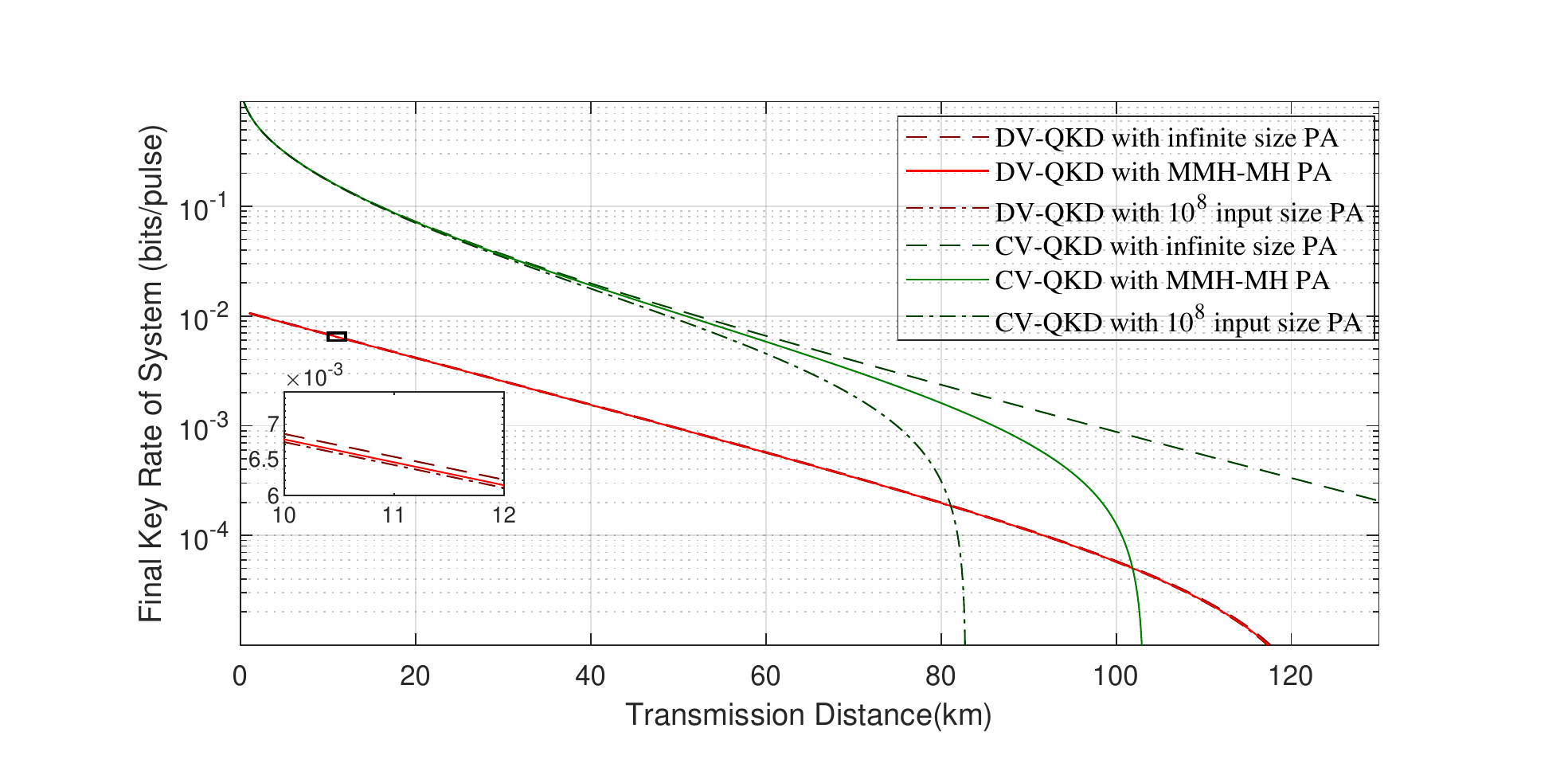}
\caption{the Final Key Rate of QKD Systems with MMH-MH PA}
\label{fig:4}       
\end{figure}
\section{Conclusion}

In this research, a new PA algorithm, named MMH-MH PA, is proposed to improve the overall performance of PA. This algorithm not only performs better but also consumes fewer computing resources. The experiment results show that the MMH-MH PA algorithm: 1. provides higher throughput at larger block size than existing CPU PA algorithms, especially on a CV-QKD system; 2. the final key rate of a system with the MMH-MH PA algorithm is extremely close to the one with an infinite size PA; 3. the MMH-MH PA algorithm costs fewer computing resources. We regards the third point as the most outstanding advantage of MMH-MH PA for current QKD systems. It provides higher final key rate and lower cost for a QKD system. It is particularly important for the resource sensitive QKD systems, such as a chip-based QKD system. In the future, we will design a FPGA-based PA scheme based on the MMH-MH PA. We believe this scheme will significantly improve the performance of a chip-based QKD system. 

\begin{acknowledgements}
	
This work was supported in part by the National Natural Science Foundation of China under Grant No. 62071151, 61301099. 

\end{acknowledgements}

%
%


%
%

\bibliographystyle{spmpsci}

\bibliography{library}

\appendix
\section{Universal Hashing Family}
\label{sec:A}
A $(D;N,M)$ hashing family is a set $F$ of D functions that $f:X \to Y$ for each $f \in F$, $|X|=N$ and $|Y|=M$.

A $(D;N,M)$ hashing family $F$ is $\delta$-universal hashing means for two distinct elements $x_1,x_2 \in X$,there exists at most $\delta D$ functions $f \in F$ such that $f(x_1)=f(x_2)$. The parameter $\delta$ is the collision probability of the hash family.

\section{Renyi Entropy and Collision Probability}
\label{sec:B}
Let $(X,p_x)$ be a probability space. The Renyi entropy of $(X,p_x)$, denoted $H_{\rm{Ren}}(p_x)$, is defined to be \[{H_{{\mathop{\rm Re}\nolimits} n}}({p_x}) =  - {\log _2}{\Delta _{{p_x}}}\]
where $\Delta _{{p_x}}$ denotes the collision probability of the probability distribution $p_x$, is defined by \[{\Delta _{{p_x}}} = \sum\limits_{x \in X} {{{\left( {p(x)} \right)}^2}} .\]
A property of the Renyi entropy is useful in this paper:
\begin{lemma} Let $(X,p_x)$ be a probability space. ${H_{{\mathop{\rm Ren}\nolimits} }}(p_x) \le H(p_x)$.
\end{lemma}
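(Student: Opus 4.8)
The plan is to derive the inequality directly from the definitions of the two entropies by a single application of Jensen's inequality. Writing out the quantities, $H(p_x) = -\sum_{x\in X} p(x)\log_2 p(x)$ and $H_{\rm{Ren}}(p_x) = -\log_2\sum_{x\in X} p(x)^2$, the claim $H_{\rm{Ren}}(p_x)\le H(p_x)$ is equivalent, after negating and rearranging, to
\[\sum_{x\in X} p(x)\log_2 p(x) \le \log_2\Big(\sum_{x\in X} p(x)^2\Big).\]

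First I would restrict both sums to the support $S=\{x\in X : p(x)>0\}$, which is harmless under the standard convention $0\log_2 0 = 0$, since atoms with $p(x)=0$ contribute nothing to either side. On $S$ the weights $p(x)$ are nonnegative and sum to $1$, so they define a convex combination. I would then apply Jensen's inequality to the concave function $\log_2$ on $(0,\infty)$, using $\{p(x)\}_{x\in S}$ both as the evaluation points and as the weights:
\[\log_2\Big(\sum_{x\in S} p(x)\cdot p(x)\Big) \ge \sum_{x\in S} p(x)\log_2 p(x).\]

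The left-hand side equals $\log_2\sum_{x\in S} p(x)^2 = \log_2\Delta_{p_x} = -H_{\rm{Ren}}(p_x)$, while the right-hand side is $-H(p_x)$; negating both sides yields $H_{\rm{Ren}}(p_x)\le H(p_x)$, which is the assertion.

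The argument is short, so the only point requiring care---and hence the main, if minor, obstacle---is the treatment of the zero-probability atoms: they must be excluded so that $\log_2 p(x)$ is well defined, and one verifies they leave both $\Delta_{p_x}$ and $H(p_x)$ unchanged under the usual convention. I would also record the equality case: since $\log_2$ is strictly concave, Jensen is tight precisely when all values $p(x)$ with $x\in S$ coincide, i.e.\ when $p$ is uniform on its support, which is the familiar equality condition for these two entropies.
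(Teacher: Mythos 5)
Your proof is correct: the reduction of the claim to $\sum_{x} p(x)\log_2 p(x) \le \log_2\bigl(\sum_{x} p(x)^2\bigr)$, the application of Jensen's inequality to the concave function $\log_2$ with the values $p(x)$ serving simultaneously as weights and evaluation points, and the restriction to the support to handle zero-probability atoms are all sound. For comparison: the paper offers no proof at all --- the lemma is simply stated in Appendix~B as a known property of the R\'enyi entropy (of order~2), with the surrounding machinery attributed to the literature --- so your Jensen argument, which is the standard proof of this fact, fills a gap rather than duplicating or diverging from anything in the paper. The equality condition you record (uniformity of $p$ on its support) is also correct and is a worthwhile addition.
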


\end{document}